\newtheorem{theorem}{Theorem}[section]
\newtheorem{lemma}[theorem]{Lemma}
\newtheorem{definition}[theorem]{Definition}
\begin{document}
\date{\today}
\title{When a `rat race' implies an intergenerational wealth trap}

\author{Joel Nishimura}
\affiliation{School of Mathematical and Natural Sciences, Arizona State University, Glendale, AZ 85306, USA}

\begin{abstract}
Two critical questions about intergenerational outcomes are: one, whether significant barriers or traps exist between different social or economic strata; and two, the extent to which intergenerational outcomes do (or can be used to) affect individual investment and consumption decisions.  We develop a model to explicitly relate these two questions, and prove the first such `rat race' theorem, showing that a fundamental relationship exists between high levels of individual investment and the existence of a wealth trap, which traps otherwise identical agents at a lower level of wealth. Our simple model of intergenerational wealth dynamics involves agents which balance current consumption with investment in a single descendant. Investments then determine descendant wealth via a potentially nonlinear and discontinuous competitiveness function about which we do not make concavity assumptions. From this model we demonstrate how to infer such a competitiveness function from investments, along with geometric criteria to determine individual decisions.  Additionally we investigate the stability of a wealth distribution, both to local perturbations and to the introduction of new agents with no wealth.     
\keywords{ Equality of opportunity \and Intergenerational transmission \and Meritocracy }
\end{abstract}

\maketitle

\section{Introduction}

An important question about income distributions is the degree to which the income of a person depends on the income of their parents.  Typically, modeling intergenerational outcomes is done either in models comparing relative outcomes in terms of quartiles, deciles or percentiles, or in more detailed economic models that price labor and capital.  An example of the first would be using bi-stochastic matrices to model the probability that children born to parents in one quartile of the income distribution become a member of a different quartile \cite{kanbur2016dynastic}.  Examples of the second type are models that predict wealth distributions with heavy tails \cite{benhabib2016skewed}, micro-founded models that explicitly model individuals, a single firm and tax policies \cite{Donder2017}, and elegant models of intergenerational choices with complete dynamic programming solutions \cite{loury1981intergenerational}.  
While bi-stochastic matrices could seemingly apply to all societies, they are unable to capture the extent of inequality between quartiles. Meanwhile, the complexity of having to explicitly model economic forces typically limits the generalizability  of micro-founded models.  In contrast, we propose a moderately general model that can capture relative wealth differences and requires only qualitative modeling of economic forces and policy.

One benefit of a micro-founded model is it can capture how income distributions might affect investment decisions.  Indeed, the ability of a person to improve the outcomes of their children is one of the fundamental incentives for a person to work or invest during their own lifetime. Such an observation is occasionally used in popular discourse to argue against estate or inheritance taxes.
While the above arguments clearly rest upon a number of important empirical questions, they also involves a logical trade-off between effort/investment and intergenerational mobility. 
 In order to investigate the extent to which effort and/or investment can be induced through positive intergenerational outcomes, we construct a multi-generational model with rational actors conducive to the argument that intergenerational rewards compel effort and investment. 
 
Our model will include agents in multiple non-overlapping generations, where each agent has an endowment and dedicates a portion of that endowment towards the outcome of their offspring (as measured by the offspring's endowment) while consuming the remainder. Specifically, agents use a Cobb-Douglas utility function to balance the benefit of their immediate consumption with the endowment outcome of their offspring\footnote{Under this formulation, parents are guided by a factor of descendant utility, but not their utility directly, and thus the parents are not technically altruistic, as seems to be the case empirically \cite{AltonjiParentalAltruism}. }. We assume the endowment outcome of children is a monotonic function, $T$, of their parent's investment in them, but make no further assumptions on its shape. Notice, this allows for models where the offspring's outcomes have increasing marginal returns on parental investments, consistent with a world where the rate of return to capital grows with assets, as appears to be the case \cite{fagereng2016heterogeneity,loury1981intergenerational,piketty2014capital}.  
 In the limit of large populations, we prove that there is a fundamental trade-off between extreme levels of effort and mobility. Namely, we show that regardless of the shape of $T$, there is a limit on parental investment, such that if any agent invests more than this limit, then there must exist some lower level of endowment that an otherwise identical agent's dynasty would be forever trapped below. Interpreting the extra effort of investment above this limit as indicative of a `rat race' and the presence of intergenerational immobility between different strata as a `wealth trap' allows for this theorem to be restated as `the existence of a rat race implies the existence of a wealth trap'.

We propose this model as a more general framework through which to investigate the fundamental trade-offs between different intergenerational outcomes, the effects of social policy and how these combine to determine income distributions.  We also briefly mention how this model can make sense of some political stances and demonstrate that this model has some interesting behavior relevant for some notions of meritocracy.

\section{An Intergenerational Model}

Consider discrete generations $j$, where between generations we assume, for simplicity,  that each member of the society $i$ is replaced by a single descendant.   We denote the proportion of total resources, income and/or status of an individual $i$ during generation $j$ abstractly as endowment $w_{i,j}\in\mathbb{R}^+$.

We assume that each individual can impact their descendant's competitiveness through gifts, investments in education, bequests, or other efforts, which we represent as a fraction $x_{i,j}\in[0,1]$ of their own endowment.  Whatever is not invested in a descendant is consumed.  We describe the trade-off between consuming $(1-x_{i,j})w_{i,j}$ and investing $w_{i,j}x_{i,j}$ using a Cobb-Douglas utility function:
\begin{equation}
u_{i,j}(x_{i,j})  =  ((1-x_{i,j})w_{i,j})^{1-\alpha_i}  (w_{i,j+1}(w_{i,j}x_{i,j}))^{\alpha_i}, \label{CobbDouglas}
\end{equation} 
where $ w_{i,j+1}$ is the observed or anticipated endowment of $i$'s descendant in generation $j+1$, and $\alpha_i \in (0,1)$ determines $i$'s trade-off between these terms. In equations where all terms belong to the same generation, we will generally omit the subscript $j$.

How $w_{i,j+1}$ depends on the investment $w_{i,j}x_{i,j}$ determines the dynamics. During any generation, $i$'s endowment can be expressed as proportional with $i$'s competitiveness, $T_{i,j}$, and the total endowments available to generation $j$, $W_j$,  so that  $w_{i,j} = W_{j}T_{i,j}/ \sum_k T_{k,j}$.  If the society is non-discriminatory and the same opportunities for intergenerational investment are available to all, then $T_{i,j+1} = T_j(x_{i,j}w_{i,j})$. In principle, $w_{i,j}x_{i,j}$, the argument of $T_{i,j}$,  captures the extent of resources invested in a child's raising and education and/or those gifted to the child; while the value of  $T_{i,j}(w_{i,j}x_{i,j})$ determines the proportion of $W_{j+1}$ that $i$'s descendant eventually acquires. 

The form of $T_{i,j}(y)$ can capture many different dynamics: for instance if $T_{i,j}(y)=y$, then one assumes that an endowment invested in a child exactly linearly increases the child's endowment; if $T_{i,j}(y) = c$ for a constant $c$ then the endowment of a child is completely independent of the parent's endowment or investment; whereas a curve such as $T_{i,j}(y) = y(1+\frac{1}{4}\tanh (y))$ has inheritances with increasing rates of return.

\subsection{ Further Assumptions}

We assume that the primary components of relative competitiveness depend less on capital than on a complex interaction of  human instruction, mentoring and hierarchical status, that have costs which generally track economic growth.  For example, many of the inputs for human capital such as: tuition at universities and elite primary and secondary schools, and medical care are common examples of Baumol's cost disease in that their costs have tracked or exceeded, sometimes greatly, per capita economic growth.  Thus, we assume that understanding the relative differences in income can be achieved in a model where $T_j(y) = T(\frac{y}{W_j})$.  In such a situation, we simplify the system without losing any dynamics by assuming that $W_j=n$, the fixed population size. Even when this assumption on the determinants of relative competitiveness is not appropriate, the equilibrium analysis in this paper is still useful as a way of discussing properties of the current distribution and competitiveness function $T$. 

Investigating situations where $T_j$ and $W_j$ are functions of $\vec{x}$ and $\vec{w}$ is naturally interesting.  Indeed, under some assumptions of $W_j$ and $T_j$, there is a tension between maximizing the longterm utility and the stability of the income distribution, while under other assumptions such a tension is not present.  While intriguing, such questions are beyond the current scope of this paper.

Assuming that $W_j=n$ and that $T_j = T$ now allows us to focus on the shape of the competitiveness function $T$. 
Indeed, the shape of $T$ determines how different levels of intergenerational investment determine future competitiveness, and the shape is a concise summary of how economic, political and social systems inside a society determine and apportion outcomes.  Further, and as we shall see, the shape of $T$ determines the overall stability or instability of a given income distribution. We assume only that $T$ is non-decreasing and continuous.  Indeed, in many potential societies the relative payoffs may depend on complex interactions or discrete cutoffs. For example, the benefits of an expensive private school may be significant, but available only to those with sufficient means.

\subsection{Rational behavior}

As standard, we consider the scaled logarithm of the utility function in eqn. \ref{CobbDouglas} and omit constants, yielding:
\begin{equation}
U_{i,j}(x_{i,j}) = \ln(1-x_{i,j}) + A_i \ln(T(x_{i,j}w_{i,j}))- A_i\ln\left(\sum_k T(w_{k,j}x_{k,j})\right),
\end{equation} 
for $A_i = \frac{\alpha_i}{1-\alpha_i} \in (0,\infty) $. 
When $ T(w_{i,j}x_{i,j}) <<  \sum_k T(w_{k,j}x_{k,j})$, such as when the population is very large, then the term $A_i\ln\left(\sum_k T(w_{k,j}x_{k,j})\right)$ does not significantly impact rational individual decisions.  Thus, in the large population limit, rational maximization of the utility function implies that $x_{i,j}$ is a function of $w_{i,j}$, denoted as function $g_A(w_{i,j})$:
$$x_{i,j} = g_{A_i}(w_{i,j}) =  \mathrm{argmax}_{x\in[0,1]} \left[ \ln (1-x) + A_i \ln(T(xw_{i,j}))\right] .$$

For some functions $T$, it is possible to analytically calculate $x$.  For example,
if the transfer function is $T(y)=y$ and the population is homogenous with $A_i = A$ for all $i$, then in the large population limit $x_{i,j} = \frac{A}{A+1}$ and $w_{i,j+1} = w_{i,j}$, implying that the income distribution exactly reproduces itself.

  More generally, in a large uniform population, $T(y)=y^k$ leads to a unique solution: $x_{i,j} = \frac{kA}{kA+1}$ and $\frac{w_{i,j+1} }{w_{k,j+1} } = ( \frac{w_{i,j} }{w_{k,j} })^k $.  Thus, an entity in control of the shape of $T$ could, by increasing $k$, compel efforts arbitrarily close to $1$.  However, for $k>1$, the ratio of wealth in one generation is exacerbated in the next. Thus, for $k>1$, this transfer function inevitably results in a winner take all income distribution, where a single individual claims virtually all the endowment\footnote{As a single individual begins to control a large portion of the total endowment, the large population limit falls apart and the single oligarch's exertion decreases.}.

For more general $T(y)$, one can determine the equilibrium strategy graphically via Lagrange multipliers.  For some endowment $w_i$, consider the level sets of utility  $C = \ln{w_i}+ \ln{(1-x_i)} +A_i\ln(T(x_iw_i))$, which is achieved by $T^{A_i} (x_iw_i)=  \frac {e^C}{w_i(1-x_i)}  = \frac { C_2}{w_i(1-x_i)} $.  The optimal $x_i$ is thus the value of $x_i$ at the intersection between the curves $T^{A_i}(x_iw_i)$ and $\frac {C_2}{w_i(1-x_i)} $ for maximal $C_2$, as illustrated in Figure \ref{jumps}. Since $T$ is not assumed to be concave, then it is possible for the system behavior to have discontinuities when several levels of effort appear equally good, as in Figure \ref{jumps}.  If $T$ and $g_i$ are differentiable, then it must also be the case that $\frac{d T^{A_i}}{dx} = \frac {e^C}{(w_i-w_ix_i)^2}$, leading to the useful fact that at rational $x_i$: 
\begin{equation}
\frac{T'(w_ix_i)}{T(w_ix_i)} = \frac{1}{A_i(w_i-w_ix_i)},\label{t't}
\end{equation}  
which constrains the shape of the $T$.

\begin{figure}
\includegraphics[width=.9\linewidth]{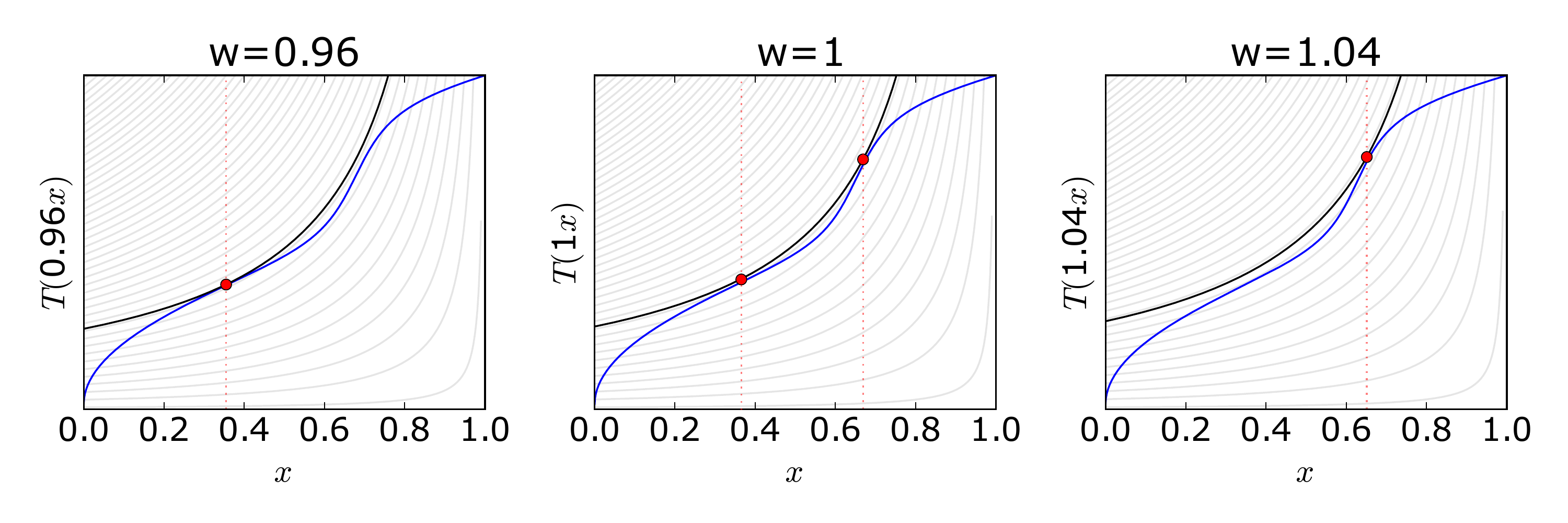} 
\caption{The optimal effort $x_i$ corresponds the maximal value of $C_2$ such that the curve $I(y)=\frac{C_2}{1-y}$ intersects $T^A(w_i y)$, where the value of $y$ at that intersection is the optimal effort. For different endowments, the relevant portion of the transfer curve changes, altering the optimal efforts. Complicated functions $T$ lead the dynamics to be discontinuous when multiple levels of effort give the same utility. 
} \label{jumps}
\end{figure}

\section{Wealth traps and meritocracies}

An important notion of stability for a wealth distribution is the stability of the distribution to sudden shocks to individual agents, or somewhat equivalently, the addition of new agents with very small endowments\footnote{We discuss stability to infinitesimal perturbations, i.e. linear stability, briefly in the appendix. }. In other words, if a member of the population, $i$, suffered an exogenous catastrophe that destroyed almost all of $i$'s endowment, $w_i \to \delta$, for some $0<\delta<<1$ (or if $i$ immigrated into the system with only a small initial endowment $\delta$) would the descendants of $i$ eventually recover, reproducing the original distribution, or would the descendants of $i$ be trapped in some lower social strata?  We formalize such a situation as a wealth trap:

\begin{definition}[Wealth Trap]
An equilibrium in the large population limit has a wealth trap at parameter value $A_0$ if, after adding new agent $i$ with $A_i=A_0$, $w_{i,0}=\epsilon$ $\epsilon>0$ there exists agent $k$, $A_k=A_i=A_0$ such that $\lim_{j\to \infty} w_{i,j} < w_{k,j}$
\end{definition}

For example, consider a stepwise transfer function with $T(y)=0.01$ for $y<0.5$ and $T(y)=1$ for $y\ge 0.5$ with egalitarian distribution $w_i = 1$ and corresponding $x_i = 0.5$ for all $i$. Notice that this system is locally stable to small perturbations in $w_i$, since the optimal strategy is always to invest exactly $w_ix_i = 0.5$.  However, this system has a wealth trap, since any person with wealth less than $0.5$ is unable to attain investment $ 0.5$ and instead achieves only $1/100$th the endowment of the remaining population for all time thereafter.

In contrast, the transfer function $T(y) = \sqrt{y}$ and the egalitarian endowment distribution $w_i =1 $ with uniform $A_i$ does not have a wealth trap. Indeed, as discussed earlier, endowments converge like $\frac{w_{i,j+1} }{w_{k,j+1} } = \sqrt{ \frac{w_{i,j} }{w_{k,j} }} $.

Notice that the definition of a wealth trap is specific to agents with the same parameter values of $\alpha_i$.  In fact, it is possible to have a situation where there is a wealth trap for agents with some parameters but not for their differently parametrized peers, as in Figure \ref{diverse_trap}.

A somewhat related notion is that of a meritocracy, which approximately sorts endowments $w_i$ according to investment appetites $\alpha_i$ :
\begin{definition}[Meritocracy]
An equilibrium in the large population limit is a meritocracy with respect to $\alpha$ if and only if for all $i$ and $k$, $\alpha_i \le \alpha_k$ implies $w_i \le w_k$ 
\end{definition}

Notice that the above definition of meritocracy applies to particular equilibria and not to a system more generally. In fact, for a given distribution of $\alpha_i$ and a choice of $T$, there can be subtle differences between the possible equilibria such that some equilibria are meritocracies while others are not.  A potentially unfortunate side-effect of defining meritocracies for equilibria, as opposed to for systems, is that it is possible for meritocratic equilibria to have wealth traps\footnote{Meritocracies can have wealth traps provided either no agent is stuck in a wealth trap, or all of the agents below some $\alpha_i$ are.}.  However, as we see in the next theorem, there remains a relationship between wealth traps and meritocracies. 

\begin{figure}
\begin{centering}
\includegraphics[width=.45\linewidth]{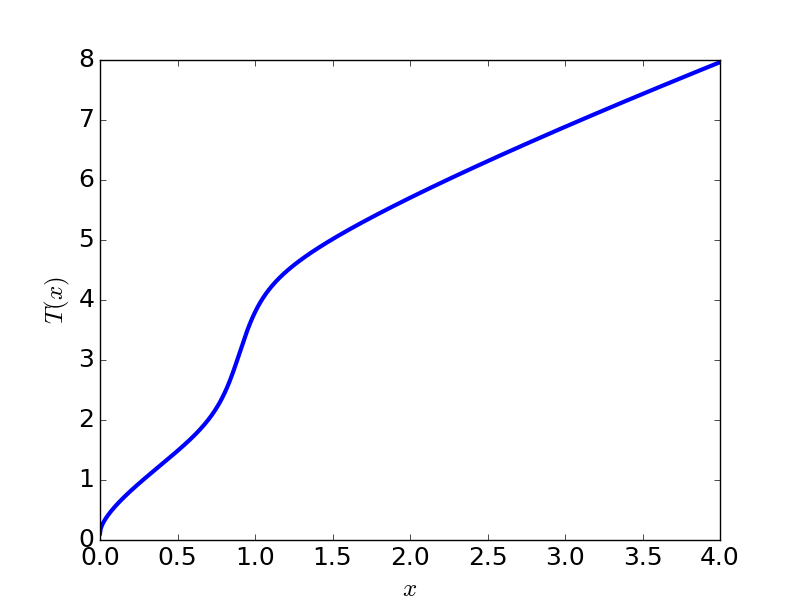} \includegraphics[width=.45\linewidth]{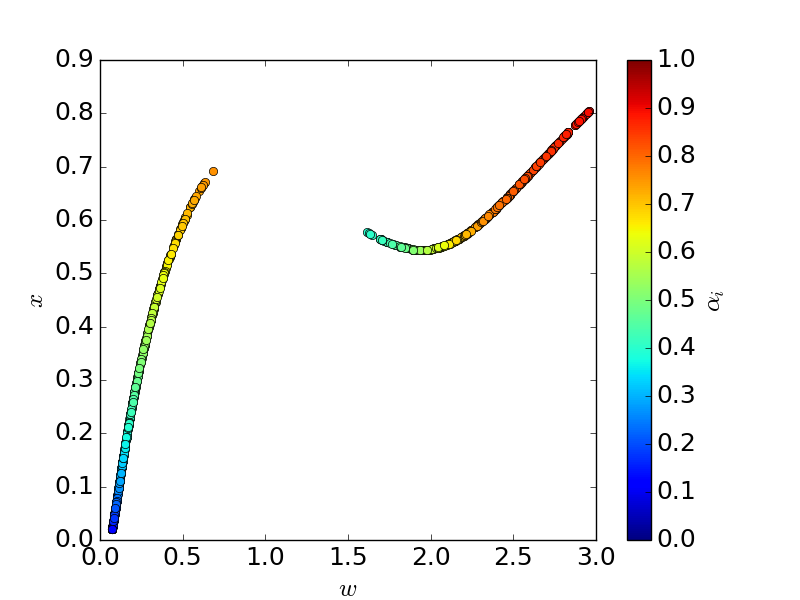}
\caption{ A system of agents with parameter $\alpha_i$ distributed uniformly at random with the above transfer function (left), has an equilibrium that sorts agents into two intervals of wealth (right).  Namely, agents with the highest and lowest values of $\alpha_i$ are naturally sorted to the lower and upper classes, while there exists a wealth trap for agents with intermediate values of $\alpha_i$, they can be found in either class. 
} \label{diverse_trap}\end{centering} 
\end{figure}

\begin{theorem}
In the large population limit and at some equilibrium, if there are no wealth traps, then the population is a meritocracy with respect to $\alpha$. 
\end{theorem}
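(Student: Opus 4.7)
The plan is to prove the contrapositive: if the equilibrium is not a meritocracy then a wealth trap must exist. Assume there are agents $i,k$ with $\alpha_i \le \alpha_k$ (equivalently $A_i \le A_k$) yet $w_i > w_k$. I will introduce a new agent with parameter $A_0 := A_i$ and initial endowment $\epsilon \in (0, w_k)$, and show that its dynasty's wealth converges to some $w^{\infty} \le w_k < w_i$. Since agent $i$ has $A_i = A_0$ and sits at $w_i > w^\infty$, this exhibits a wealth trap at $A_0$.

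In the large-population limit the new agent neither perturbs the existing agents' decisions nor the normalizing sum $Z := \sum_\ell T(g_{A_\ell}(w_\ell) w_\ell)$, so its wealth evolves under the one-dimensional map
\begin{equation*}
\phi_A(w) \;:=\; \frac{n\,T\!\left(g_A(w)\,w\right)}{Z},
\end{equation*}
with the equilibrium identities $\phi_{A_i}(w_i)=w_i$ and $\phi_{A_k}(w_k)=w_k$. The heart of the argument is two monotonicity claims about the optimal total investment $y^\ast(A,w) := g_A(w)\,w$. Rewriting the per-agent utility in terms of $y = xw$ gives $V(y;A,w)=\ln(1-y/w)+A\ln T(y)$, whose cross partials
\begin{equation*}
\frac{\partial^2 V}{\partial y\,\partial w} = \frac{1}{(w-y)^2}, \qquad \frac{\partial^2 V}{\partial y\,\partial A} = \frac{T'(y)}{T(y)}
\end{equation*}
are both non-negative, so $V$ is supermodular in $(y,w)$ and in $(y,A)$. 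A standard revealed-preference argument (summing the two optimality inequalities for competing selections) then yields that $y^\ast(A,w)$ is non-decreasing in each of $A$ and $w$, and because $T$ is non-decreasing this monotonicity is inherited by $\phi_A(w)$.

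Applying monotonicity in $A$ at $w = w_k$ gives $\phi_{A_i}(w_k) \le \phi_{A_k}(w_k) = w_k$, and then monotonicity in $w$ gives $\phi_{A_i}(w) \le w_k$ for every $w \in [0,w_k]$. Hence the new agent's orbit, starting at $\epsilon < w_k$, remains inside $[0,w_k]$; since $\phi_{A_i}$ is monotone on this interval, the sequence is monotone and bounded, hence convergent to some $w^\infty \le w_k$. Taking $k$ as comparison in the definition gives the wealth trap (and agent $i$ serves as an even stronger witness).

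The delicate step is the monotonicity of $y^\ast$ itself, since the paper's hypotheses permit the non-concavities and the jumps in $g_A$ discussed around Figure \ref{jumps}. The comparison has to be run with consistent selections from each argmax correspondence, and in the borderline case where both optimality inequalities are tight one uses that $T$ must then be flat between the two contested investment levels, so the strictly decreasing consumption penalty forces the smaller investment to be strictly preferred, ruling out an inversion. The remaining ingredients -- existence of the limit, the vanishing of the new agent's own weight in $Z$, and the freedom to choose $\epsilon$ in $(0,w_k)$ -- are routine in the large-population limit.
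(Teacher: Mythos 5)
Your proposal is correct and follows essentially the same route as the paper: assume non-meritocracy ($\alpha_i\le\alpha_k$ but $w_i>w_k$), use that optimal investment is monotone in $\alpha$ at fixed wealth and monotone in wealth (Lemma \ref{investmentNonDecreasing}), and conclude via the monotonicity of $T$ that a type-$\alpha_i$ dynasty starting below $w_k$ can never rise above $w_k$, which exhibits the trap against agent $i$. The paper compresses this into a brief contradiction argument (if the shocked agent recovered, $k$ would have to grow too, contradicting equilibrium), while you give a more detailed rendering of the same comparison via the supermodularity inequalities, the explicit map $\phi_A$, and a monotone-orbit convergence argument; the only small correction is that the wealth-trap witness must be agent $i$ (who shares the parameter $A_0$), not $k$, as your parenthetical already notes.
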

\begin{proof}
If the system were not a meritocracy, then there exists $i$ and $k$ such that $\alpha_i\le\alpha_k$ but $w_i>w_k$.  However, since there are no wealth traps, then $i$ must be able to recover from a shock that reduces $w_i$ to $w_k$, but since $\alpha_k\ge \alpha_i$, then $x_k(w_k)\ge x_i(w_i)$ and thus $k$ must grow along with $i$'s recovery, implying that $k$ was not initially at equilibrium, a contradiction.  
\end{proof}

Clearly, though, any equilibria with a wealth trap neighbors equilibria where two identical individuals are on opposite sides of the trap, seemingly implying that the system is not a meritocracy (though in the large population limit, moving individuals changes the distribution on a set of measure zero). 

An important feature to remember about meritocracies is that it is not the case that if $w_i>w_k$ then $x_i>x_k$.  Namely, meritocracies with respect to $\alpha$ do not necessarily award endowment according to effort $x$. Indeed, equilibria in which $x_i\ge x_k$ implies $w_i\ge w_k$ would form a different type of meritocracy and may be worth future study.
 In any case, all equilibria clearly have the property that if $w_i>w_k$ then $w_ix_i>w_kx_k$.

\section{Effort as a function of endowment}

Policy arguments occasionally set as their aim a specific shape of the effort function $g_A(w)$.  However, not all shapes of $g_A(w)$ are possible, as we shall see in this section by momentarily reversing our perspective: rather than taking $T$ as granted and determining from it the effort function $g_A(w)$, we assume the effort function is known and attempt to infer $T$. This inference will allow us to to determine the conditions for the existence and uniqueness of $T$ for a given $g_A(w)$.  
We will show that there are only two limitations to the shape of a piecewise differentiable $g(w)$: the first limitation is that total investment, $wg(w)$, is non-decreasing in $w$; the second limitation is that $g'(w)$ is Lipshitz continuous at all $w$ where $g'$ exists.  These two conditions allow for the slope of $T$ to be integrated, thus creating a suitable $T$ that is unique to within rescaling. Integrating $T$ is also useful in situations where $g$ can be empirically estimated.

\begin{lemma}
For any transfer function $T$ and some value of $A$, investment $wg_A(w)$ is non decreasing as a function of $w$. \label{investmentNonDecreasing}
\end{lemma}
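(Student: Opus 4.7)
The plan is to reparametrize the optimization in terms of the total investment $y := xw$, so that after dropping the additive constant $-\ln w$ the objective becomes
$$F(y,w) := \ln(w-y) + A\ln T(y)$$
on the feasible set $y\in[0,w]$, and $y^*(w) := w\,g_A(w)$ is the argmax of $F(\cdot,w)$. The lemma then reduces to showing $y^*(w)$ is non-decreasing in $w$, which is precisely a monotone comparative statics assertion about $F$.

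The key structural fact I would isolate is that $F$ has strictly increasing differences in $(y,w)$: only the $\ln(w-y)$ term couples the two variables, and its mixed partial is $(w-y)^{-2}>0$. Intuitively, raising $w$ strictly lowers the marginal consumption cost $1/(w-y)$ of choosing a larger $y$, so an agent with a larger endowment should find higher investments more attractive. The standard Topkis-style argument by contradiction then closes things out. Suppose $w_1<w_2$ but $y_1 := y^*(w_1) > y^*(w_2) =: y_2$. Since $y_2 < y_1 \le w_1 < w_2$, the point $y_2$ is feasible at $w_1$ and $y_1$ is feasible at $w_2$, so optimality gives
$$F(y_1,w_1) \ge F(y_2,w_1) \quad \text{and} \quad F(y_2,w_2) \ge F(y_1,w_2).$$
Adding these two inequalities, the $A\ln T(y_1)+A\ln T(y_2)$ terms cancel pairwise, leaving $\varphi(w_1)\ge\varphi(w_2)$ for $\varphi(w) := \ln(w-y_1) - \ln(w-y_2)$. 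But $\varphi'(w) = \frac{1}{w-y_1}-\frac{1}{w-y_2} > 0$ since $y_1>y_2$, so $\varphi$ is strictly increasing, contradicting $w_1<w_2$.

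I do not see a substantive obstacle here: no concavity, differentiability, or first-order conditions on $T$ are used; the $T$ terms are handled purely by cancellation, so continuity (or even measurability) of $T$ would suffice. The one minor subtlety is that $g_A(w)$ need not be single-valued, exactly the discontinuity phenomenon illustrated in Figure \ref{jumps}. However, the contradiction above used nothing beyond the defining optimality of $y_1$ and $y_2$, so it applies to any selection from the argmax correspondence, and the conclusion that total investment $w\,g_A(w)$ is non-decreasing holds regardless of how ties are broken.
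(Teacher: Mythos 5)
Your proposal is correct and is essentially the paper's own argument: both use the two revealed-preference inequalities (optimality of $y_1$ at $w_1$ against $y_2$, and of $y_2$ at $w_2$ against $y_1$), cancel the $A\ln T$ terms, and conclude from the strict monotonicity of the remaining cross term that $(y_2-y_1)(w_2-w_1)\ge 0$. The paper merely writes this multiplicatively via the level-set constants $c_1,c_2$ rather than in your additive Topkis-style phrasing.
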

\begin{proof}
Suppose that at two endowments, $w_1$ and $w_2$, have $w_1g(w_1)=y_1$ and $w_2g(w_2)=y_2$.  For $y_1$ to be optimal at $w_1$, it must be that if for some $c_1$, $T^A(y_1)=\frac{c_1}{w_1-y_1}$ and  $T^A(y_2)\le \frac{c_1}{w_1-y_2}$. 
Similarly, there is $c_2$ such that $T^A(y_2)=\frac{c_2}{w_1-y_1}$ and  
$T^A(y_1)\le \frac{c_1}{w_2-y_1}$.  Manipulating these two inequalities implies that $\frac{w_2-y_1}{w_2-y_2} \le \frac{w_1-y_1}{w_1-y_2}$ and thus $(y_2-y_1)(w_2-w_1)\ge 0$, implying that $y_2>y_1$ only if $w_2\ge w_1$.
\end{proof}

That $wg$ is non-decreasing implies that $g(w_0) \le \lim_{w\to w_0^+}g(w)$ and, subsequently, that about any $w_0$ there is an interval on which $g(w)<1-\epsilon$ for $\epsilon>0$.

Next, consider how to integrate $T(y)$ at points where $g(w)$ is differentiable.  Notice that since at any $w$ there exists $c$ such that, $T^A(wg(w))) = \frac{c}{w-wg(w)}$ and tangency at this point gives that $AT^{A-1}\frac{dT}{dy}(wg(w)) = \frac{c}{(w-wg(w))^2}$, then,
\begin{eqnarray*}
\frac{d}{dw} T(wg(w)) &=& (g(w)+wg'(w))\frac{dT}{dy}(wg(w)) \\ 
&=& \frac{c(g+wg')}{A(w-wg)^2T^{A-1}} \\
&=& \frac{T(wg)}{A}\frac{(g+wg')}{(w-wg)} \\
\end{eqnarray*}
Since $wg(w)$ is non decreasing and $g(w)<1-\epsilon$, $T$ can be integrated forward or backward and is non-decreasing.  Further, if $g'$ is Lipshitz continous everywhere, then $\frac{d}{dw} T(wg(w))$ is as well, implying that $T$ is unique to within scaling.  

Discontinuities in $g(w)$ represent points when there are two levels of investment that return equal utilities. Thus, if there is at some $w_0$, $\lim_{w\to w_0^-}g(w) = x^-$ and  $\lim_{w\to w_0^+}g(w) = x^+$ where $x^- \ne x^+$, then for some c $\lim_{w\to w_0^-}T^A(wx^-) = \frac{c}{w_0-w_0x^-}$ and $\lim_{w\to w_0^+}T^A(wx^+) = \frac{c}{w_0-w_0x^+}$.  Thus, $(1-x^-)\lim_{w\to w_0^-}T^A(wx^-) =(1-x^+) \lim_{w\to w_0^+}T^A(wx^+)$, allowing for the integration to be continued.  In such a scenario, notice that $T(y)$ can assume any value less than $\frac{c}{w_0-y}$ between $w_0x^-$ and $w_0x^+$, and thus preclude uniqueness in $T$, though in a completely non-consequential manner.

\section{Rat races imply wealth traps}

While with different choices of $T$ it is possible to compel any level of effort less than $1$, greater levels of effort eventually imply the existence of a wealth trap. Indeed, as we will show in this section, this tradeoff between effort and intergenerational mobility is described by the following theorem:

\begin{theorem}\label{ratRace}
For a fixed transfer function $T$, if at equilibrium and in the large population limit there exists an agent $i$ such that $x_i>\alpha_i$, then there exists a wealth trap for those with parameter $\alpha=\alpha_i$.  
\end{theorem}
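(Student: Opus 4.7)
My plan is to introduce the one-step intergenerational map $h(w) := T(w g_{A_i}(w))/c_0$, with $c_0 := T(w_i x_i)/w_i$ chosen so that $w_i$ is a fixed point; in the large-population limit this is exactly the wealth evolution of a new agent of parameter $\alpha_i$ starting at wealth $w$. Since $T$ is non-decreasing and $w g_{A_i}(w)$ is non-decreasing by Lemma~\ref{investmentNonDecreasing}, $h$ is itself non-decreasing. Hence exhibiting any $\epsilon \in (0, w_i)$ with $h(\epsilon) < \epsilon$ suffices: monotonicity then forces the orbit $\epsilon, h(\epsilon), h^2(\epsilon), \dots$ to be weakly decreasing, so $\lim_j w_{i,j} \le \epsilon < w_i$, satisfying the wealth-trap definition with agent $i$ itself playing the role of the equilibrium witness $k$.

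The first move is to reformulate the hypothesis $x_i > \alpha_i$ as an elasticity condition on $T$. Using $\alpha_i = A_i/(A_i+1)$, the inequality rearranges to $x_i > A_i(1-x_i)$; multiplying the first-order condition~(\ref{t't}) by $y_i := w_i x_i$ yields $y_i T'(y_i)/T(y_i) = x_i/[A_i(1-x_i)]$, so $x_i > \alpha_i$ is equivalent to the elasticity of $T$ at $y_i$ strictly exceeding $1$, or equivalently $(T/y)'(y_i) > 0$.

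Next, I would derive $h(w) < w$ in a left-neighborhood of $w_i$ by a Taylor expansion. The FOC gives $T'(y_i)/c_0 = 1/[A_i(1-x_i)]$, hence $h'(w_i) = (w g_{A_i})'(w_i)/[A_i(1-x_i)]$, so the condition $h(w) < w$ just below $w_i$ reduces to the strict inequality $(w g_{A_i})'(w_i) > A_i(1-x_i)$. Implicit differentiation of the FOC in $w$ produces a closed-form expression for $(w g_{A_i})'(w_i)$ in terms of $(\log T)''(y_i)$, and combining this with the strict second-order optimality condition (which upper-bounds $(\log T)''(y_i)$) together with the elasticity inequality from the preceding paragraph yields the required strict inequality in generic cases, in particular whenever $(\log T)''(y_i) \ge 0$.

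The main obstacle is the non-generic edge case in which $(\log T)''(y_i)$ is sufficiently negative that the purely local derivative comparison fails and $w_i$ is in fact locally attracting from below. There the argument becomes global: the elasticity inequality combined with the non-decreasingness of $h$ and the limiting behavior of $h$ at small wealth forces $h$ to admit a lower fixed point $w^* \in (0, w_i)$. For instance, if $T(0) > 0$ then for $w$ below some threshold the optimum sits at the boundary $g_{A_i}(w) = 0$, so $h$ is constantly equal to $T(0)/c_0$ on that range and immediately yields a fixed point $w^* = T(0)/c_0 < w_i$; when $T(0) = 0$, the asymptotics of $T$ near zero play the analogous role. Starting from any $\epsilon$ in the basin of $w^*$ then gives $\lim_j w_{i,j} \le w^* < w_i$, closing the proof.
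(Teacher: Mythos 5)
Your reduction is sound as far as it goes: defining the one-step map $h(w)=\gamma T(wg_{A_i}(w))$ with $\gamma=w_i/T(w_ix_i)$, noting that $h$ is non-decreasing by Lemma \ref{investmentNonDecreasing} and the monotonicity of $T$, and observing that a single $\epsilon<w_i$ with $h(\epsilon)<\epsilon$ yields a trapped orbit is a clean reformulation (the paper uses the same fact in contrapositive form: ``no trap'' forces the growth factor $r\bar g\ge 1$ everywhere below the equilibrium level). The elasticity reformulation of $x_i>\alpha_i$ via eqn.~(\ref{t't}) is correct, and your local step does work when it applies: one gets $h'(w_i)=(wg)'(w_i)/\bigl(A_i(1-x_i)\bigr)$ with $(wg)'(w_i)=\bigl(1-A_iw_i^2(1-x_i)^2(\log T)''(y_i)\bigr)^{-1}$, so $(\log T)''(y_i)\ge 0$ indeed gives $h'(w_i)>1$ and hence a trap.

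The genuine gap is that the case you call a ``non-generic edge case'' is the actual content of the theorem, and your treatment of it is not a proof. The hypothesis $x_i>\alpha_i$ constrains only $T'/T$ at the single point $y_i$, while $(\log T)''(y_i)$ is unconstrained below by the second-order condition, so $w_i$ can genuinely be attracting from below; nothing local rules this out. At that point you assert that monotonicity of $h$ plus ``the limiting behavior of $h$ at small wealth'' forces a lower fixed point $w^*\in(0,w_i)$ — but that assertion is precisely what has to be proved, and it cannot follow from data at $w_i$ and at $w\to 0$ alone, since a priori $h(w)>w$ could hold on all of $(0,w_i)$, which is exactly the no-trap scenario. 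Your concrete sub-case is also flawed: $T(0)>0$ does not imply $g_{A_i}(w)=0$ for small $w$ (e.g. $T(y)=1+\sqrt{y}$ has an interior optimum for every $w>0$), and even when the boundary optimum occurs, $\gamma T(0)$ is a fixed point only if it itself lies in the region where $g=0$. The paper closes this gap with a global argument that your proposal lacks: assuming no trap, the recovery condition $r\bar g\ge 1$ propagates the bound $\bar g\ge\alpha_i+\epsilon$ down the entire wealth range, and the resulting differential inequality $\partial r/\partial z\ge\epsilon\bigl(1+\bar g'/\bar g\bigr)\ge 0$ (the last inequality again from Lemma \ref{investmentNonDecreasing}) is integrated backwards from $z^*$ until either $\bar g>1$ or $r<1$, a contradiction. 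Some such propagation of the hypothesis from $w_i$ down to arbitrarily small wealth is what your argument is missing. (Like the paper's main-text proof you also assume differentiability of $g$; the discontinuous case needs the separate appendix argument, but that is secondary to the gap above.)
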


If we colloquially refer to exertion above $\alpha_i$ as a `rat-race,' then theorem \ref{ratRace} implies that a rat-race implies a wealth trap (or a `rat-trap').  Indeed, if the common notion of a rat-race is of a person toiling with little time for leisure, aware that falling behind in the rat-race would leave them and their descendants left forever behind, then this theorem seems entirely consistent with that depiction. We provide the necessary but rather tedious details to extend this theorem to potentially discontinuous $g_A(w)$ in the appendix.  

In order to simplify the following statements, we change variables from endowment $w$ to  $z = \ln{w}$, the order of magnitude of an endowment.  Similarly, we let $\bar{g}(z) = g(e^z)$, and at equilibrium we denote $\gamma = \frac{W}{ \sum_k T(w_{k,j}x_{k,j})}$ so that $w_{i,j}= e^{z_{ij}} = \gamma T_{i,j}$ in the large population limit.

\begin{proof}[of theorem \ref{ratRace} for differentiable $g$]

Assume to the contrary that there are no wealth traps and endowment level $z^*$ achieves $g_A(z^*) =\alpha_i+\epsilon =  \frac{A}{A+1}+\epsilon$ for some $\epsilon>0$.  We will arrive at a contradiction by attempting to integrate the inter-generational rate of return, $r_{i,j} = \frac{w_{i,j+1}}{x_{i,j}w_{i,j}}$, backwards from $z^*$ without producing a wealth trap. 
  As we integrate backwards, we will establish two facts: first, $\bar{g}(z)$ remains at or above $ \frac{A}{A+1}+\epsilon$; second, either $r(z)$ decreases or $\bar{g}(z)$ increases as $z$ decreases.  It will thus follow that while integrating backwards, one of two things must inevitably happen: either $g_A(z)$ exceeds its bound of $1$, or $r$ becomes less than $1$, implying a wealth trap.

The condition that the system is at equilibrium at $z^*$ implies $r(z^*) = \frac{w_i}{ w_i\bar{g}(z^*)}= \frac{1}{ \bar{g}(z^*)}\le \frac{A+1}{A +\epsilon(A+1)}$.  Further, whenever $r(z)\le \frac{A+1}{A +\epsilon(A+1)}$, then $\bar{g}(z)\ge \frac{w_{j+1}}{w_j}(\frac{A}{A+1} +\epsilon)>\frac{A}{A+1} +\epsilon$, provided there are no wealth traps.  Thus, 

\begin{eqnarray*}
\frac{\partial r}{\partial z} &=& \frac{d }{dz} \frac{T(e^z\bar{g}(z))\gamma}{e^z\bar{g}} \\ 
&=& \gamma (1 + \frac{\bar{g}'}{\bar{g}})\left(\frac{dT}{dy}-\frac{T}{e^z\bar{g}}\right)  \\ 
&=& \gamma T (1+\frac{\bar{g}' }{\bar{g}})\left( \frac{c}{AT^A (e^z-e^z\bar{g})^2} -\frac{1}{e^z \bar{g}} \right)   \\ 
&=& \gamma T (1+\frac{\bar{g}' }{\bar{g}})\left( \frac{1}{A e^z(1-\bar{g})} -\frac{1}{e^z \bar{g}} \right)   \\ 
&=& r\frac{1}{1-\bar{g}} (1+\frac{\bar{g}' }{\bar{g}})\left( \frac{A+1}{A}\bar{g}-1  \right)   \\
&\ge & \epsilon(1+\frac{\bar{g}' }{\bar{g}})
\end{eqnarray*}
Since $\frac{d}{dz}e^z \bar{g}(z) = e^z \bar{g} +e^z \bar{g}' \ge 0$, then lemma \ref{investmentNonDecreasing} also implies that $\frac{\bar{g}'}{\bar{g}} \ge -1$, implying that the factor $(1+\frac{\bar{g}' }{\bar{g}})\ge0$.  Thus, at any $z<z^*$, either $\bar{g}' =-\bar{g}$ or $\frac{\partial r}{\partial z}>0$, and integrating from $z^*$ to $-\infty$ leads to either $\bar{g}>1$ or $r<1$ at some $z$, either one a contradiction.  

\end{proof}


\section{Discussion}

We have shown that in any large fixed society where intergenerational outcomes determine effort according to a Cobb-Douglas utility function, extraordinary levels of effort imply a wealth trap.  For any single transfer function $T$, this relationship may seem obvious, but the result establishes the same bound for all possible shapes of a fixed $T$.  Thus, no amount of clever redesigning of $T$ can maximize efforts above $\alpha_i$ while avoiding the disegalitarian properties of a wealth trap.  While the results of this paper are specific to the model, the more abstracted point likely generalizes: that encouraging work with increased rewards eventually becomes synonymous with encouraging work with an implicit threat of someone being `left behind.'   

Of course, as is with models, many of the assumptions are somewhat naive. 
An assumption worth considering is that $T$, to the extent that such a function even exists, is surely not fixed in time.  Indeed, policy that changes $T$, such as subsidies/funding for education, inheritance tax rates, social insurance, etc., are some of the most politically contentious issues and regularly change within a single generation. Ironically though, that these issues are so frequently debated only increases the need for theoretical frameworks with which to gauge claims of their effects, and we believe equilibrium analysis remains useful for this task.  

Specifically, this model allows us to readily observe the likely arguments of different people throughout the income distribution. Clearly, every person $i$ has self-interested reasons to argue for increasing $T(y_i)$ and decreasing $T$ at all other values. However, while the self-interest is apparent, any political arguments for changing $T$ must necessarily be made in a way that try to appeal to rest of the population (or at least a majority). Our model assumes $T$ is known, but that the total endowment function, $W(\vec{x},\vec{w})$ is not, therefore the natural way to structure arguments is to argue about the shape of $W$ around the current values of $\vec{x}$, $\vec{w}$.  
 We briefly discus how these self-interested arguments may be structured at different endowment levels below, and we note that they appear at exactly as one expects and they are the same as or similar to existing arguments. 

Individuals near the top of the endowment distribution with endowments $w_+ = \max_k w_k$ have naturally self-interested  arguments to
claim that the marginal impact of effort, $\frac{\delta W}{\delta x}$, is large.  Arguments that $\frac{\delta W}{\delta x}$ are large, such as claims that classes of people do not work enough, suggest that the curvature of $T$ should be increased so as to induce larger choices of $x$, and integrating this change over all of $T$ necessitates a large increase to $T(w_+g(w_+))$. Notice that this line of reasoning would be particularly critical of welfare, or public services, which leads to $T(0)>0$, since any non-zero intercept leads $x(0)=0$. 
A complementary argument to the focus on $x_i$ is that $W$ is sensitive to increasing $x_iw_i$, and thus increasing $x_i$ for the well endowed is particularly important.  These arguments are precisely the set of arguments that claim that a tax on inheritance, or taxes in general, decrease savings beyond what is optimal.  

The above arguments are centered on claims of large partial derivatives of $W$ and are thus ultimately arguments framed around Pareto improvements and greater rates of growth. 
Regardless, the self-interested arguments of the well-endowed inevitably and unsurprisingly push $T$ towards greater inequality. 
 Notice that while there is no natural limit to the extreme values of $x$ for the above argument, 
 theorem \ref{ratRace} gives that this line of argumentation has a fundamental limit, beyond which inducing greater effort leads to discreet and immobile classes. 

Meanwhile, although much of this model was explicitly designed to explain the possible arguments of the well-endowed, natural arguments are available to the rest of the endowment distribution.  

The first such argument is that $\frac{\delta W}{\delta x}$ is small or negative.  In particular, if the primary problem facing the society were a lack of aggregate demand, then not only would $W$ increase, the total social utility, which depends on $\sum_k (1-\alpha_k)\log (w_k)$, would be increased by decreasing the curvature of $T$, reducing $x_i$.  Similarly, utilitarian concerns more generally support reducing $T(y)$ for large $y$ and increasing it for small $y$, especially very small $y$.  

Notice that since the above arguments for flattening $T$ are likely distributional, only for very negative values of $\frac{\delta W}{\delta x}$ would they be a Pareto improvement.

As has been observed before, those that claim that the principle goal of analysis should focus solely on Pareto gains will be biased towards the self-interested arguments of the well-endowed and biased against the arguments of the lesser-endowed.  

The above analysis assumes that the natural division of interests are between those above some endowment threshold  (e.g. the median income) and those below it.  It is worth considering whether there could be a natural politics of the middle class, where the middle had natural economic cause against a coalition of the well- and lesser-endowed.  The answer is, of course, that such a division is possible, but requires either a more complicated argument about $W$ or perhaps more naturally represents balancing the above arguments.

Ultimately, this model of intergenerational mobility is able to support class-based arguments, which is probably not a difficult feat, but remains interesting nonetheless.

\section{Conclusions}

In this paper we have proved the first of hopefully many `rat-race' theorems, which establish a fundamental connection between extraordinary levels of effort and distributional issues.  Clearly, important questions can be addressed by extending the model developed in this paper as well as proving similar theorems in completely new models.  

In terms of extensions of the model developed in this paper, adding noise to the model, such that $T(y)$ provided the mean of some distribution of outcomes rather than the deterministic outcome, would be a natural direction for future work.  Depending on the variance in a stochastic setting, deterministic wealth traps could be anything from barriers that a dynasty might eventually overcome, barriers that require exponential time to overcome, to completely inconsequential features in a system governed by the stochastics. Determining at what levels of noise the appropriate summary of the system ceases to be the deterministic one presented here would be an important accomplishment. It would also be interesting to determine, when fitting a distribution, how much leeway exists between stochastic parameters and the deterministic parameters governing $T$.  

Another potential direction for future research is to begin to include other heterogeneities into the population.  Currently the model can accommodate a notion of meritocracy based upon an intergenerational discounting rate, $\alpha$, but including individualized competencies would be a natural way to expand the model. Ultimately, in this an other settings, we hope that this paper furthers and broadens the examinations into the connections between income mobility and social policy.

\section{Appendix}

\subsection{Linear stability}
Here we derive several useful formulas. First, we use the fact that $\frac{T'}{T} = \frac{1}{A_i(w-wg_i(w))}$ to find how investment decisions change with $w$.
\begin{eqnarray*}
\frac{\partial g(w)}{\partial w } &=& -\frac{U_{xw}}{U_{xx}}\\
&=& -\frac{Axw(TT''-T'^2)+AT'T} {-T^2(1-g)^{-2}+Aw^2T''T-Aw^2T'^2} \\
&=& -\frac{g}{w} + \frac{1}{w(1+\frac{1}{A}-Aw^2(1-g)^2\frac{T''}{T})}
\end{eqnarray*}
This implies that the change in total investments follows:
\begin{eqnarray*}
\frac{\partial (wg(w))}{\partial w } &=& wg'(w) +g(w)\\
&=& \frac{1}{(1+\frac{1}{A}-Aw^2(1-g)^2\frac{T''}{T})}.
\end{eqnarray*}
 Further, since investment cannot fall with endowment, this quantity must be positive, and thus $\frac{T''}{T}< \frac{A-1}{A^2(w-wg(w))^2}$, which is the same condition that $T^A$ has a smaller second derivative than a curve of constant Utility.  

This equation can then be used in the equations for linear stability.  If $g_i$ and $T$ are differentiable, the standard linear stability of an equilibrium is given by the eigenvalues of the Jacobian with entries:
\begin{eqnarray*}
\frac{\partial w_{i,j+1}}{\partial w_{i,j}} &=& \gamma T'  (w_{i,j}g_i(w_{i,j})) \left( w_{i,j}\frac{\partial g_{i}}{\partial w} (w_{i,j}) + g_i(w_{i,j}) \right)   \left( 1 - \frac{w_{i,j}}{W_j}  \right)  \\
&=& \frac{  \gamma A ( 1 - \frac{w_{i,j}}{W_j}  )TT' } {T+TA-A^2w_{i,j}^2(1-g_i(w_{i,j}))^2T''} \\
\frac{\partial w_{k,j+1}}{\partial w_{i,j}} &=& \gamma T'  (w_{i,j}g_i(w_{i,j})) \left( w_{i,j}\frac{\partial g_{i}}{\partial w} (w_{i,j}) + g_i(w_{i,j}) \right)   \left( -\frac{w_{k,j}}{W_j}  \right) \\
&=& \frac{  -\gamma A\frac{w_{i,j}}{W_j}  )TT' } {T+TA-A^2w_{i,j}^2(1-g_i(w_{i,j}))^2T''} \\
\end{eqnarray*}
Since total endowment changes are zero-sum, column sums add to zero. Naturally, this implies a zero eigenvalue corresponding to the neutrally stable perturbation, which enriches everyone so that the same distribution recreates itself.  For the special case of fully egalitarian distributions, where $w_i=w_k$ for all $i$ and $k$, the Gershgorin circle theorem implies that the equilibrium is stable if $\gamma T'  (w_{i,j}g_i(w_{i,j})) \left( w_{i,j}\frac{\partial g_{i}}{\partial w} (w_{i,j}) + g_i(w_{i,j}) \right)<1  $.

\subsection{Proving theorem \ref{ratRace} for discontinuous $\bar{g}$}

Note that first, in considering more general $\bar{g}$, we need not worry about $z$ where $\bar{g}(z)$ is continuous but not differentiable, as wherever $\bar{g}$ is continuous so is $r$ and the monotonicity of $T$ implies that such points occupy a set of Lebesgue measure $0$.  Thus, points where $\bar{g}$ is continuous but not differentiable cannot affect an integration of $\bar{g}$ or $r$.

Finally, we first investigate how $r$ changes with $z$ at discontinuities of $\bar{g}(z)$ when $r<\frac{A+1}{A}$.   In order to extend the proof in the main body, we need to show that for decreasing $z$, discontinuities in $g$ do not prevent the previous contradiction; either $r$ decreases below $1$ or $x$ increases past $1$.  As we will show,  discontinuities in $\bar{g}$ cannot increase $r$, and while they can decrease $\bar{g}$, they cannot do so without also changing $T^A$.

\begin{proof}
Consider some point $z_0$ such that $\lim_{z\to z^{_0-}} \bar{g}(z)= x_0$ and $\lim_{z\to z^{_0+}}\bar{g}(z) = x_1$, where $1<r(z_0)<\alpha^{-1}$ and $x_1>\alpha$. 

First, we show that $x_0>\frac{A}{A+1}$.  Since $x_0$ and $x_1$ are both optimal, then 
there exists some $c$ such that $T^A(x_1w) = \frac{c}{w(1-x_1)}$ and $T^A(x_0w) = \frac{c}{w(1-x_0)}$.  Manipulating these equations and the assumption that $r=\frac{\gamma T(x_0w)}{x_0w}<\frac{A+1}{A}$ and $w \le \gamma T(x_1w)$ yields: 
\begin{eqnarray*}
x_0 &=& 1 - \frac{c}{wT^A(x_0w)} \\
&=& 1 - \frac{w(1-x_1)T^A(x_1w)}{wT^A(x_0w)} \\
&>&  1-\frac{w^A}{\gamma ^A T^A(x_0w)} (x_1^A(1-x_1))\left( \frac{A+1}{A} \right)^A   \\
&\ge& 1- x_1^A(1-x_1)\left( \frac{A+1}{A} \right)^A\\
\end{eqnarray*}
The minimum of this quantity is $x_0 >  \frac{A}{A+1}$, which is attained when $x_1 = \frac{A}{A+1}$. 

To see that $r_0<r_1$, consider:
\begin{eqnarray*}
\frac{r_0}{r_1} &=&   \frac{\gamma T(x_0w)}{\gamma T(x_1w)}\frac{x_1w}{x_0w}  \\
&=& \frac{x_1(1-x_1)^\frac{1}{A}}{x_0(1-x_0)^\frac{1}{A}}.
\end{eqnarray*}
Since $x^A(1-x)$ is decreasing for $x>\frac{A}{A+1}$, then $\frac{A}{A+1}<x_0<x_1$ implies $r_0<r_1$.   Thus, if $\bar{g}$ has only a finite number discontinuities or if $g$ has an infinite number of discontinuities but $\bar{g}' >-1 $ and thus $\frac{dr}{dz}>0$ on a set of infinite measure, these discontinuities do not prevent integration eventually leading to $g(w)>1$ or $r<1$.

In contrast, consider the scenario where $\bar{g}$ has an infinite number of discontinuities, $\frac{dr}{dz}>0$ on a set of finite measure and $\bar{g}'=-1$  almost everywhere else.  In order for $\bar{g}=-1$ on an infinite set but $\bar{g}>0$ always, then there must be an infinite number of discontinuities of $\bar{g}(w_k)$ such that the $\sum_k \bar{g}(w_k^+)-\bar{g}(w_k^-) = \infty$.  However, notice that at any discontinuity of $\bar{g}$:
\begin{eqnarray*}
T^A(wx_1)-T^A(wx_0) &=&\frac{c}{(1-x_1)}-\frac{c}{(1-x_0)} \\
&=& \frac{c}{w}\frac{x_1-x_0}{(1-x_1)(1-x_0)} \\
&\ge& \frac{c}{w}(A+1)^2(x_1-x_0)\\
\end{eqnarray*}
Thus, $T^A$ decreases at least proportionally to $\bar{g}$ as $z$ is integrated backwards, implying that if   $\sum_k \bar{g}(w_k^+)-\bar{g}(w_k^-) = \infty$, then $\lim_{y\to0}T^A(y) = -\infty$ contradicting the fact that $T>0$.  

\end{proof}

\section{Acknowledgments}

This work was improved by comments from Sian Mooney, Erika Camacho and Jeannie Wilson.

\bibliographystyle{plain}
\bibliography{incomeDistRef}

\begin{thebibliography}{10}

\bibitem{benhabib2016skewed}
Jess Benhabib and Alberto Bisin.
\newblock Skewed wealth distributions: Theory and empirics.
\newblock Technical report, National Bureau of Economic Research, 2016.

\bibitem{Donder2017}
Philippe~De Donder and John~E. Roemer.
\newblock The dynamics of capital accumulation in the us: simulations after
  piketty.
\newblock {\em The Journal of Economic Inequality}, 15(2):121--141, 2017.

\bibitem{fagereng2016heterogeneity}
Andreas Fagereng, Luigi Guiso, Davide Malacrino, and Luigi Pistaferri.
\newblock Heterogeneity and persistence in returns to wealth.
\newblock Technical report, National Bureau of Economic Research, 2016.

\bibitem{kanbur2016dynastic}
Ravi Kanbur and Joseph~E Stiglitz.
\newblock Dynastic inequality, mobility and equality of opportunity.
\newblock {\em The Journal of Economic Inequality}, 14(4):419--434, 2016.

\bibitem{loury1981intergenerational}
Glenn Loury.
\newblock Intergenerational transfers and the distribution of earnings.
\newblock {\em Econometrica}, 49(4):843--67, 1981.

\bibitem{Note1}
Under this formulation, parents are guided by a factor of descendant utility,
  but not their utility directly, and thus the parents are not technically
  altruistic, as seems to be the case empirically \cite
  {AltonjiParentalAltruism}.

\bibitem{Note2}
As a single individual begins to control a large portion of the total
  endowment, the large population limit falls apart and the single oligarch's
  exertion decreases.

\bibitem{Note3}
We discuss stability to infinitesimal perturbations, i.e. linear stability,
  briefly in the appendix.

\bibitem{Note4}
Meritocracies can have wealth traps provided either no agent is stuck in a
  wealth trap, or all of the agents below some $\alpha _i$ are.

\bibitem{piketty2014capital}
Thomas Piketty, Arthur Goldhammer, and LJ~Ganser.
\newblock Capital in the twenty-first century.
\newblock 2014.

\end{thebibliography}

\end{document}